\definecolor{background}{rgb}{0.84,0.92,0.95}
\newcommand{\btz}{\begin{tikzpicture}}
\newcommand{\etz}{\end{tikzpicture}}
\newcommand{\answer}[1]{\null}
\newtheorem{prop}{Proposition}
\newtheorem{prop_append}{Proposition}
\definecolor{darkblue}{rgb}{0.0,0.0,0.66}                 % Custom color: dark blue
\begin{document}
\title{Price dispersion across online platforms: Evidence from hotel room prices in London (UK)%
\thanks{The authors are grateful to Frank Verboven, Jan De Loecker, Jo Van Biesebroeck, Luis Vasconcelos, for invaluable help, advice, and encouragement. 
We sincerely thank the editor and the two anonymous referees for their detailed comments which led to significant improvements in the content and exposition of the paper. 
We also sincerely thank the seminar participants at KU Leuven and at the University of Essex for their valuable comments and helpful suggestions.
Lastly we thank Sunhyung Lee for helpful discussions.}
}
\author{Debashrita Mohapatra\thanks{Department of Agriculture and Resource Economics, University of Connecticut, Storrs, CT, USA E-mail: \href{mailto:debashrita.mohapatra@uconn.edu}{debashrita.mohapatra@uconn.edu }} \hspace{-2mm} 
\and Debi Prasad Mohapatra\thanks{Department of Resource Economics, University of Massachusetts Amherst, Amherst, MA, USA E-mail: \href{mailto:dmohapatra@umass.edu}{dmohapatra@umass.edu}} \hspace{-2mm} 
\and Ram Sewak Dubey\thanks{Economics Department, Montclair State University, Montclair, NJ, USA 07043; E-mail: \href{mailto:dubeyr@montclair.edu}{dubeyr@montclair.edu} (Corresponding Author)}}

\date{\today}
\maketitle

\begin{abstract}
{\noindent This paper studies the widespread price dispersion of homogeneous products across different online platforms, even when consumers can easily access price information from comparison websites. We collect data for the 200 most popular hotels in London (UK) and document that prices vary widely across booking sites while making reservations for a hotel room. Additionally, we find that prices listed across different platforms tend to converge as the booking date gets closer to the date of stay. 
However, the price dispersion persists until the date of stay, implying that the \enquote{law of one price} does not hold. We present a simple theoretical model to explain this and show that in the presence of aggregate demand uncertainty and capacity constraints, price dispersion could exist even when products are homogeneous, consumers are homogeneous, all agents have perfect information about the market structure, and consumers face no search costs to acquire information about the products. Our theoretical intuition and robust empirical evidence provide additional insights into price dispersion across online platforms in different institutional settings. Our study complements the existing literature that relies on consumer search costs to explain the price dispersion phenomenon. \\

\noindent\textbf{Key Words:} Price Dispersion, Online Platforms, Law of one price, Hotel Booking \\
\noindent\textbf{JEL Codes:} D40, D81, L83, L86.}
\end{abstract}

\newpage
%\onehalfspace
\section{Introduction}
Price dispersion is a widely observed phenomenon in offline and online markets, where sellers offer different prices for similar products.\footnote{See \citet{baye2006information}, \citet{baye2006persistent}, \citet{escobari2012dynamic}, \citet{brynjolfsson2000} for some of the evidences available in recent literature.} Traditional literature has explained the existence of price dispersion for homogeneous products through imperfect information among consumers and sellers (\cite{salop1982theory}, \cite{burdett1983equilibrium}). 
To quote \cite{sorensen2000equilibrium}, \enquote{\emph{Generally speaking, price dispersion will arise when there is a positive (but uncertain) probability that a randomly chosen customer knows only one price. Thus, even in markets with symmetric firms selling homogeneous products, prices may differ in equilibrium if consumers must incur search costs to obtain price information.}}

However, with the growth of technology and easy accessibility to internet services, gathering information about prices and product characteristics has become increasingly more accessible.  In the new age of the data revolution, the marginal costs of acquiring and transmitting information have consistently fallen. 
With a single click, consumers are now able to obtain information about prices charged by different sellers for products that range from computer hardware and software (\emph{Shopper.com}) to mortgages (\emph{Mortagequotes.com}) and airlines (\emph{farecompare.com}, \emph{kayak.com}). In the hotel industry, a consumer can access multiple platforms while booking a room. 
Different booking websites (e.g., Expedia.com, Booking.com, Hotels.com, etc.) post prices and availability for a given type of room at a given hotel for a specific date of stay. 
However, a consumer can also use a price comparison website (such as \emph{Skyscanner.com}, \emph{Trivago.com}), where she can check all the prices (and other room-related details) offered by different booking websites, compare them and book the hotel room using the platform that provides the lowest price. 
Effectively, the price comparison website works as an information exchange as it collates information from different booking websites and displays the combined information on prices (and other details for a hotel room) posted by different platforms as its search results. 
However, even though price comparison websites have been steadily increasing in popularity among consumers (\cite*{bodur2015online}), it is puzzling that price dispersion remains persistent across different markets selling homogeneous products.

In this paper, we focus on the hotel industry and investigate the dispersion of prices listed by booking platforms on a comparison website while booking a given type of room in a specific hotel for a given booking date. 
We provide robust empirical evidence for two findings: first, we document the existence of widespread price dispersion among homogeneous products in the hotel industry. 
In particular, we show that for a \enquote{given date of stay for a specific room type in a particular hotel}, prices across platforms vary widely for any booking date.  
Second, we focus on the evolution of price dispersion across booking platforms as the booking date is closer to the date of stay. 
We find that although the price dispersion persists, i.e., the law of one price does not hold, prices tend to converge (hence, dispersion goes down) as the booking date approaches the date of stay.

We do so by collecting detailed data from a price comparison website \emph{Skyscanner.com} for hotel rooms for the 200 most popular hotels in London, United Kingdom. 
To capture the existence of price dispersion across different platforms, we collect (through web scraping) prices posted for a specific date of stay across different booking platforms for a specified room type in a hotel. 
To ensure that the products are \emph{homogeneous}, we collect this information for a one-night stay (specified for two guests, 1 room)  on a given date of booking. 
We collected the price information for seven different dates of stay in November 2017 for all 200 popular hotels in London (UK). 
For each date of stay, we collected the price information on fifteen different booking dates.  
Therefore, each price information in our dataset corresponds to a \emph{date of stay--hotel id--room type} combination on a given website for a specific booking date. 

To document the existence of price dispersion, we compute the coefficient of variation of prices posted across booking sites for each \emph{date of stay--hotel id--room type} combination for a given booking date.%
\footnote{The coefficient of variation is given by the ratio of the standard deviation to the average price and captures the extent of variability about the average price.}  
Our results show that the coefficient of variation varies widely in our sample, providing strong evidence of price dispersion among booking websites for hotel rooms. 
On average, prices vary by 9\% around the mean hotel room prices. 
To further take care of any potential heterogeneity that may exist between booking websites (such as website reputation and user-friendliness of the booking site), we resort to regression analysis where we take into account the potential price variations due to time-invariant website characteristics, in addition to the hotel identity, room type, date and time of booking, and date of stay. 
We show that extensive price dispersion still exists across different booking websites even after controlling for a wide range of observable covariates. 

We then investigate how average price and price dispersion evolve as the date of booking approaches the date of stay. 
We find that the average hotel room price increases as we approach the date of stay. 
Additionally, as the booking date gets closer to the date of stay, prices listed across different platforms tend to converge, leading to a drop in the dispersion. 
Nevertheless, the price dispersion persists until the date of stay, implying that the law of one price does not hold.

Given the empirical evidence, we then investigate why price dispersion may exist for homogeneous products in our institutional setting. 
In particular, we ask: \enquote{in markets with symmetric firms selling homogeneous products, can prices differ in equilibrium if consumers incur zero search costs to obtain price information?}
We present a simplified model of the market to show that in the presence of aggregate demand uncertainty and capacity constraints, price dispersion could exist as an equilibrium outcome of the pricing game even when products are homogeneous, consumers are homogeneous, all agents have perfect information about the market structure, and consumers face no search costs to acquire information about the products. 
Uncertain demand is a common feature of the hotel booking industry, especially in big cities such as London, where the flow of customers to a specific hotel is often difficult to predict. 
Similarly, online booking platforms in the hotel industry face capacity constraints while making reservations as the hotels have only a limited number of rooms open for reservation. 
Additionally, a booking platform often has access to only a subset of rooms from a given hotel, adding to the capacity constraint within a booking website. 
Therefore, by showing the existence of price dispersion under such institutional settings but with zero search costs, our results provide additional insights into the existence of price dispersion. 
In doing so, we complement the existing literature that relies on consumer search costs to explain the price dispersion in equilibrium. 

Our study makes several contributions. 
First, we contribute by documenting widespread and persistent price dispersion among homogeneous products in the context of the hotel booking industry, where consumers can easily access price comparison websites. We show that although the law of one price does not hold, the prices tend to converge as we approach the date of stay. 
We highlight a unique setting, collect novel and detailed data from different hotel booking websites, and report robust evidence of price dispersion and convergence across different booking platforms for hotels in London. 
Second, our paper contributes to the growing literature on the study of price comparison websites by highlighting the role of such platforms in gathering information and potentially reducing consumers' search costs while making online purchases. 
Finally, we develop a simplified model to highlight the fundamental economic forces (in particular, demand uncertainty and capacity constraints) that may lead to price dispersion as an equilibrium outcome even when the consumers can acquire price information at zero search cost.

The rest of the paper is organized as under.
In  Section \ref{section:litreview}, we provide a review of the recent literature on price dispersion in online markets. 
Then we describe the details of the dataset collected from \emph{Skyscanner.com}, and document the empirical evidence of price dispersion in Section \ref{section:empiricalevidence}. 
Section \ref{section:model} describes the theoretical model, showing the existence of price dispersion even under zero search costs.
Section \ref{section:priceconvergence} documents the evidence of price convergence in the booking prices as we near the date of stay. We conclude in Section \ref{section:conclusion}.

\section{Literature review}\label{section:litreview}
Our paper is related to the extensive literature on online and offline markets that documents price dispersion for homogeneous products across sellers. 
The idea of \enquote{Law of one price} has been widely challenged in several theoretical and empirical studies in the existing literature. 
Starting with the seminal work by \cite{stigler1961economics}, several economists have developed various theoretical models based on information asymmetries, search frictions, price volatility, and demand uncertainty, among others, to explain this phenomenon and have documented the existence of price dispersion in offline markets (\cite{goldbergverboven}). 
With the increased adoption of online transactions, although consumers could access broader markets at relatively low cost, patterns of price dispersion among homogeneous products remain a widely reported phenomenon.
There are various reasons, however, to expect price dispersion to be lower on the internet platforms, as search costs are typically lower, and the websites offer significantly easier entry to the new sellers. 
With the advent of price comparison websites that collates information from different platforms and provides it to the consumers in one place, we expect the dispersion of prices for homogeneous product to reduce even further.

It is pertinent to discuss empirical evidence in some of the prominent recent papers. 
\citet{bailey1998} compared prices of 125 books, 108 music CDs, and 104 software titles in 1996 and 1997 through 52 internet and traditional outlets and found that price dispersion in online markets is as prevalent as in offline retailers. 
Also, he observed that prices on the internet are higher than on the offline channels. 
These findings contradict the theoretical predictions that expect a decline in price dispersion in a comparatively frictionless market on the internet.
Similarly, \citet{baylis2002} has documented that the pricing pattern observed in the market for Olympus C-2000Z digital camera (data for September 24- December 19, 1999, fourteen weeks period) and Hewlett-Packard 6300 flatbed scanner (data for October 7- December 19, 1999,  eleven weeks period) could be explained via price dispersion. 
\citet{brynjolfsson2000} compared prices for a matched set of 20 books and 20 CDs from 41 online and offline outlets between February 1998 and May 1999. 
They found that internet retailer prices differ by an average of $33\%$ for books and $25\%$ for CDs. 
However, by weighing the prices by proxies for market share, the dispersion is lower in internet channels than in conventional channels, reflecting the dominance of certain heavily branded retailers.
They figure out that, even if there is lower friction in many dimensions of internet competition, branding, awareness, and trust remain essential sources of heterogeneity among Internet retailers. 
\cite{orlov2011does} studied the influence of the internet on price dispersion by taking empirical evidence from the airline industry. 
This study shows that an increase in internet penetration reduces the average price and leads to higher intra-firm price dispersion. 
\citet{moller2016competition} did an empirical analysis to study the effect of price discrimination in airline markets and showed the positive effect of competition on price dispersion when demand uncertainty is high, or product differentiation is low. 
On the contrary, the effect is negative when preferences are sufficiently certain, and products are heterogeneous.%
\footnote{See \citet{dakshina2019}- a study of price dispersion over time and across geographical location, \citet{durba2014}- a study of price dispersion and inter-firm, inter-flight and frequency competition, \citet{xing2010}- a study of price dispersion for online branches of Multi-Channel Retailers and online-only retailers for recent references, \citet{choi2019}- a study of price dispersion for gasoline in 25 regions of Seoul, Korea in  response to asymmetric information between retailers and consumers.}

To explain those empirical patterns, existing literature has either relied on heterogeneity or search costs or both (\cite{arnold2011asymmetric}, \cite{sorensen2000equilibrium}, \cite{ellison2009search}, \cite{ellison2012search}, \cite{ellison2018search}, \cite{honka2019empirical}) to explain the existence of price dispersion in equilibrium. 
With the wider adoption of price comparison websites among consumers, several theoretical and empirical research has been done in the recent past to understand the price dispersion that persists even after the growth of the internet and the popularity of price comparison websites.
The study by \cite{baye2001information} focuses on the strategic incentive of a price comparison website known as \enquote{Information gatekeeper} and finds that it positively affects the consumers who enjoy free access to the sites. 
Access to information leads to competitive pricing and reduces the degree of price dispersion over time. 
A number of papers on clearing house models (for example, \citet{arnold2011asymmetric}; \citet{arnold2014unique}; \citet{baye2001information}; \citet{baye2004}; \citet{chioveanu2008advertising}; \citet{moraga2021consumer}; \citet{rosenthal1980model}; \citet{salop1977bargains}; \citet{varian1980model} have rationalized price dispersion in homogeneous goods markets as an equilibrium outcome through costly information acquisition leading to information asymmetry among consumers. 
Indeed, price dispersion has persisted despite technological advances such as the internet, extensive advertisement, and comparison sites. 
\citet{ronayne2015price} has investigated the impact of introducing a price comparison website to a market for a homogeneous good, where the sites charge firms for sale. 
They predict that a single price comparison website increases prices for all consumers, whereas multiple price comparison websites would lead to a profitable equilibrium. 
 The paper obtains price dispersion even in the presence of price comparison websites endogenously in equilibrium, which comes from the fact that some consumers have incomplete information.
Similarly, \citet{bodur2015online} reports the effect of price comparison websites on online and offline prices.

Our study complements this literature by empirically documenting widespread price dispersion in the hotel booking industry and provides a simple theoretical foundation where price dispersion can exist in equilibrium in markets with symmetric firms selling homogeneous products even when consumers incur zero search costs to obtain price information.

\section{Existence of Price Dispersion: empirical evidence}
\label{section:empiricalevidence}
For our analysis, we define a product as a ``date of stay–hotel id–room type'' combination.
Depending on when a consumer books a hotel room for a given date of stay, the price of a product might vary. For a specific date of stay, different booking platforms (e.g., Expedia.com, Booking.com, Hotels.com, etc.) post prices and availability for a given type of room in a given hotel. Therefore, prices can differ across platforms even for the same booking date. In our analysis, we focus on the price dispersion across platforms for a given date of booking. To document the price dispersion of products across different platforms, we collected detailed data from a price comparison website \emph{Skyscanner.com} for hotel rooms for the 200 most popular hotels in London, United Kingdom.\footnote{We used the web services provided by data extraction platform \emph{grepsr.com} to scrape the data for this study.}  
The price comparison website \emph{Skyscanner.com} works as an information exchange. 
For specific dates of stay, it collates information from different booking websites. 
It displays the combined price information (and other hotel room details) posted by different platforms as its search results. Hence, it provides the consumer with all relevant information in one place, saves the consumer  any search costs of visiting various websites separately, and helps her to find the room at the lowest available price. 

\begin{figure}[ht!]
\caption{Price Comparison Website} \label{fig:pricecompwebsite}
\centering
\includegraphics[trim=0in 0.5in 0in 0.5in,scale=0.68]{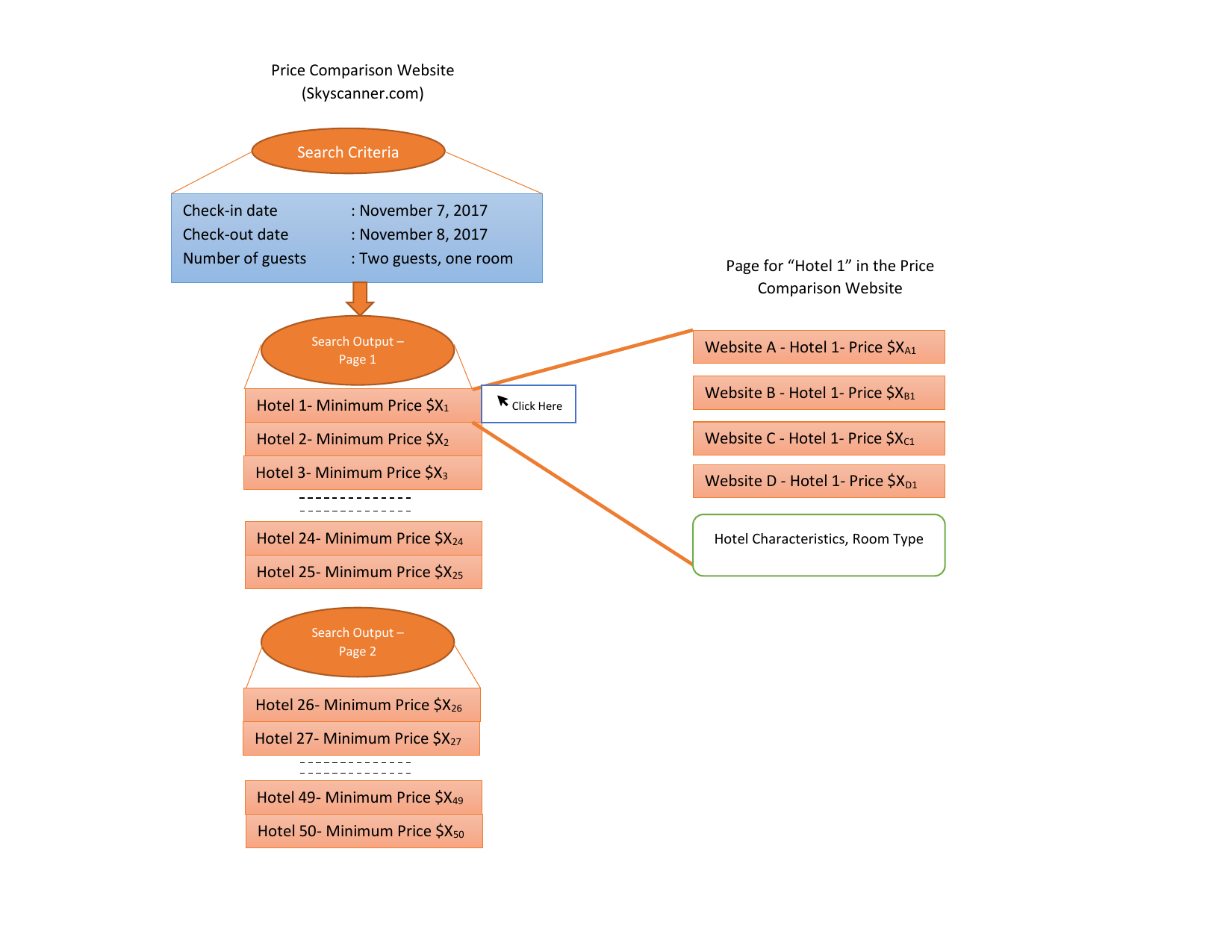}
\begin{tablenotes}
\footnotesize
\item[] Notes: 
This figure illustrates how data is organized on price comparison websites. On the left, one can see the search criteria and what a typical search displays on the Skyscanner website. Web page 1 presents hotel results from 1 to 25, web page 2 shows hotels 26 to 50, and so forth. If one clicks on Hotel 1, a new page opens with room prices for that hotel on various websites (shown on the right panel of Figure \ref{fig:pricecompwebsite}).
\end{tablenotes}
\end{figure}

Figure \ref{fig:pricecompwebsite} illustrates how data is organized on price comparison websites. As the left panel of the figure shows, when a user provides her travel details (like date of stay and number of guests), the price comparison website displays a list of hotels on different pages (also called web pages). Each web page typically lists around 20-25 hotels and their prices. During our sample period, a hotel search for London provided at most 14 such web pages. Clicking on a hotel link reveals its room prices on various websites. For instance, as the bottom left panel of figure \ref{fig:pricecompwebsite} shows, web page 1 lists prices for hotels numbered 1 to 25, and web page 2 lists prices for hotels 26 to 50. Clicking on Hotel 1
shows its prices on different websites (right panel of figure \ref{fig:pricecompwebsite}).

We take several steps while collecting the data to focus on the price dispersion of \emph{homogeneous} products across different platforms. 
First, we only consider prices for a one-night stay in a hotel room (specified for two guests, 1 room). 
The different dates of stay belong to the second week of November, such as November 7th check-in--November 8th check-out; November 8th check-in--November 9th check-out, and so on until November 13th check-in--November 14th check-out. 
Second, we collect detailed information about the website-hotel room-specific characteristics (whether the room is double or single).\footnote{A hotel might offer more room types (such as a suite or a family room) that may accommodate more than two guests. We did not collect information for those rooms in order to keep the set of products homogeneous.} %, whether the payment for the booking is refundable or non-refundable, and so on). 
Finally, for a given date of stay, we collect price data for all hotels on a specific date of booking at a particular time (for example, say for date-of-stay November 7th, we collect price data on October 24th, 2017, where October 24th is the date of booking). 
Day 1 of data collection starts on October 24th, 2017, at 10:30 am EST. 
On the first day of the data collection, we searched for all hotels to stay in London and sorted them by popularity.  
For our analysis, we considered the top 200 hotels ranked by popularity by the Skyscanner website on day 1. In other words, the 200 hotels in our sample were ranked as the 200 most popular hotels in London as selected by
Skyscanner’s algorithm on October 24th, 2017, at 10:30 a.m. EST, the first day of our data collection.\footnote{Skyscanner uses consumer reviews, hotel star ratings (2-star, 3-star, 4-star, and 5-star), and other details from booking and text reviews to rank the hotels by popularity.} We collected information for these 200 hotels in parallel every day at 10:30 a.m. EST until November 9, 2017. 

For each date of stay and the booking date, our search listed at most 14 web pages of results. We collected the information on the 200 hotels across the entire set of search results. Note that a hotel won’t appear in the search results if it is fully booked. If any of the 200 hotels do not appear on the search results, it implies that they’re booked for that date of stay. 
Our final dataset from London contains 179,234 observations. 
Each observation in our dataset corresponds to a price listed on a website for a \emph{date of stay--hotel id--room type} combination on a given booking date. 
To further verify whether the prices posted on the price comparison website are indeed the actual price to be paid by the consumer, we take a random sample of 1000 observations from the dataset. 
For each observation, we verify the price on the payment page for each posted website for the hotel room. 
We have confirmed that the prices posted on Skyscanner.com are the actual price to be paid if the consumer clicks on the corresponding website and makes the purchase.

\begin{table}[H]
\caption {Summary statistics for prices across websites for hotels in London} \label{summary_stats}
\begin{tabular}{l c c c c } \hline\hline\\[-3mm]
Variables  & Mean & Std.Dev. & Min & Max \\ \hline\\[-2mm]
Posted Price (in \pounds)\footnotemark[1] & 219 & 149 & 36 & 998  \\[2mm]
No. of websites\footnotemark[2] & 9 & 3 & 1 & 19 \\[2mm]
Range (in \pounds)\footnotemark[3]  & 55 & 75 & 0 & 986 \\[2mm]
Coefficient of Variation & 0.09 & 0.18 & 0 & 0.82 \\\hline
\multicolumn{5}{l}{\footnotesize \footnotemark[1] Posted Price refers to the price posted by different website} \\
\multicolumn{5}{l}{\footnotesize \footnotemark[2] No. of websites refers to the number of websites posting price for} \\
\multicolumn{5}{l}{\footnotesize  the same hotel room in skyscanner.com} \\
\multicolumn{5}{l}{\footnotesize \footnotemark[3]   Range refers to the
gap between the highest and lowest prices }\\
\multicolumn{5}{l}{\footnotesize   posted for a hotel room across websites}\\
\hline\hline
\end{tabular}
\end{table}
Table \ref{summary_stats} documents summary statistics from our data.
The average prices posted across websites varies between \pounds 36  and \pounds 998. 
This suggests that our dataset covers a wide range of hotels, both cheap, low-quality, and high-quality, expensive hotels. 
We observe that consumers can access posted prices from 9 different websites on average for each hotel room.
The number of websites for posted prices varies between 1 to 19.  
To capture the price dispersion on a given booking date across websites for a specific \enquote{date of stay--hotel id--room type} combination, we compute the price range as well as the coefficient of variation.   
For a specific date of stay, on a given date of booking, the price range refers to the difference between the maximum posted price and the minimum posted price across websites for a hotel room. 
Our results show on average, the maximum price exceeds the minimum price by \pounds 55. 

To compute the coefficient of variation on a given date of booking, we determine the average price across websites and the standard deviation of prices across websites for a \enquote{date of stay--hotel id--room type} combination. 
The coefficient of variation is given by the ratio of the standard deviation to the average price and captures the extent of variability in relation to the average price. 
If there is no price variation across websites, the standard deviation is 0, and hence the coefficient of variation is given by 0. 
A higher coefficient of variation reflects higher price dispersion across websites. 
As documented in table  \ref{summary_stats}, the average coefficient of variation is 0.09, suggesting that the relative variation of prices across websites is around 9\% of the average price.

\begin{figure}[H]
\caption{Scatter plot of Coefficient of Variation}
\centering
\includegraphics[trim=0in 0in 0in 0in,scale=1.0]{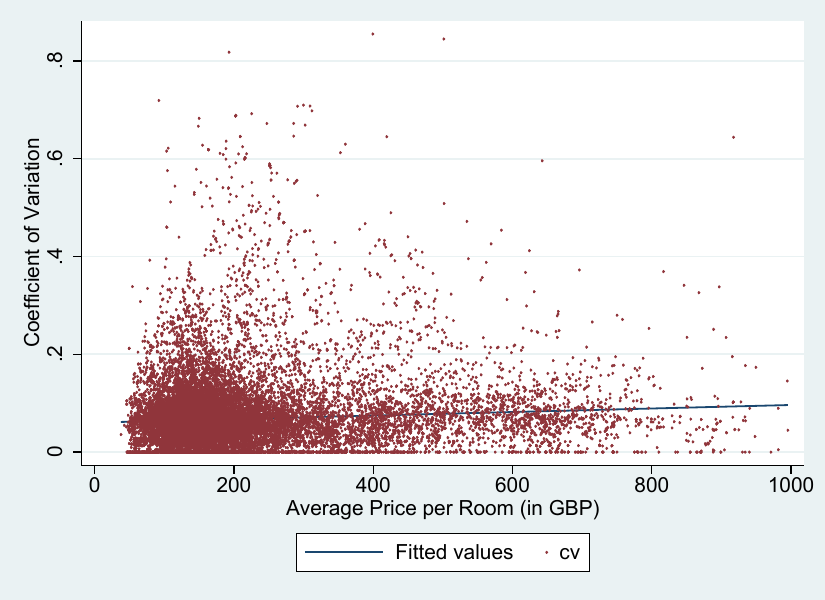}
%\floatfoot{
\newline
\tiny{Note: Here we have plotted the average prices for a \enquote{date of stay--date of booking--hotel id--room type} combination in the x-axis, and the  coefficient of variation of prices in the y-axis.}
%}
\label{figure:cv_london}
\end{figure}

As Figure \ref{figure:cv_london} shows, the coefficient of variation varies from 0 to about 0.8. This indicates significant price
differences among various websites when looking at a particular hotel room, date of stay, and booking date.
In appendix Section \ref{appendixsection:stdev}, we plot the standard deviation with respect to prices. We observe wide variation in the standard deviation values further providing evidence of price dispersion. 

Note that, although we have considered the hotel identity, the room type, date and time of booking, and date of stay, the differences in the websites displaying prices may lead to heterogeneity across products and may explain the price dispersion. 
To further address this concern, we run the following regression for each  \enquote{date of stay--date of booking} combination.
\begin{equation}\label{regression_equation1}
\begin{aligned}
\underset{d_s, d_b, h, r, w}{\text{Price}} = & \beta_0 + \beta_1 \left(\text{Hotel dummy}\times\text{Hotel room type dummy}\right)  \\
& + \beta_2 (\text{Website dummy}) + \varepsilon_{d_s, d_b, h, r, w} \\
\end{aligned}
\end{equation}
where $d_s$ stands for date of stay,  $d_b$ for date of booking, $h$ for hotel id, $r$ for room type,  and  $w$ for the website.
We run 106 separate regressions for each \enquote{date of stay--date of booking} combination.\footnote{Note that, for the date of stay Nov 7, 2017, we have data from 14 dates of booking prior to the date of stay. Similarly, for the date of stay Nov 13, 2017, we have data from 17 dates of booking prior to the date of stay. For all other 5 dates-of-stay, we have data from 15 days prior to the date of stay. Therefore, we have 106 regressions in total.} 
In each regression, we consider the prices of a specific room in a given hotel as we control for $\left(\text{Hotel dummy}\times\text{Hotel room type dummy}\right)$ in the regression. 
Additionally, we control for website dummies to control for any differences between the websites. For example, some websites may have a loyal consumer base, which may be the preferred website for that set of consumers. 
If the differences in websites can explain all price heterogeneity for a specific room in a given hotel, then given the regression specification, we should be able to explain the data perfectly. 
In particular, if website heterogeneity leads to price dispersion, controlling for website dummies should result in an $R^2$ close to 1 for most of the regressions. 
\begin{figure}[H]
\caption{Evidence of price dispersion across websites}
\centering
\includegraphics[trim=0.7in 2.5in 0.7in 2.5in,scale=0.5]{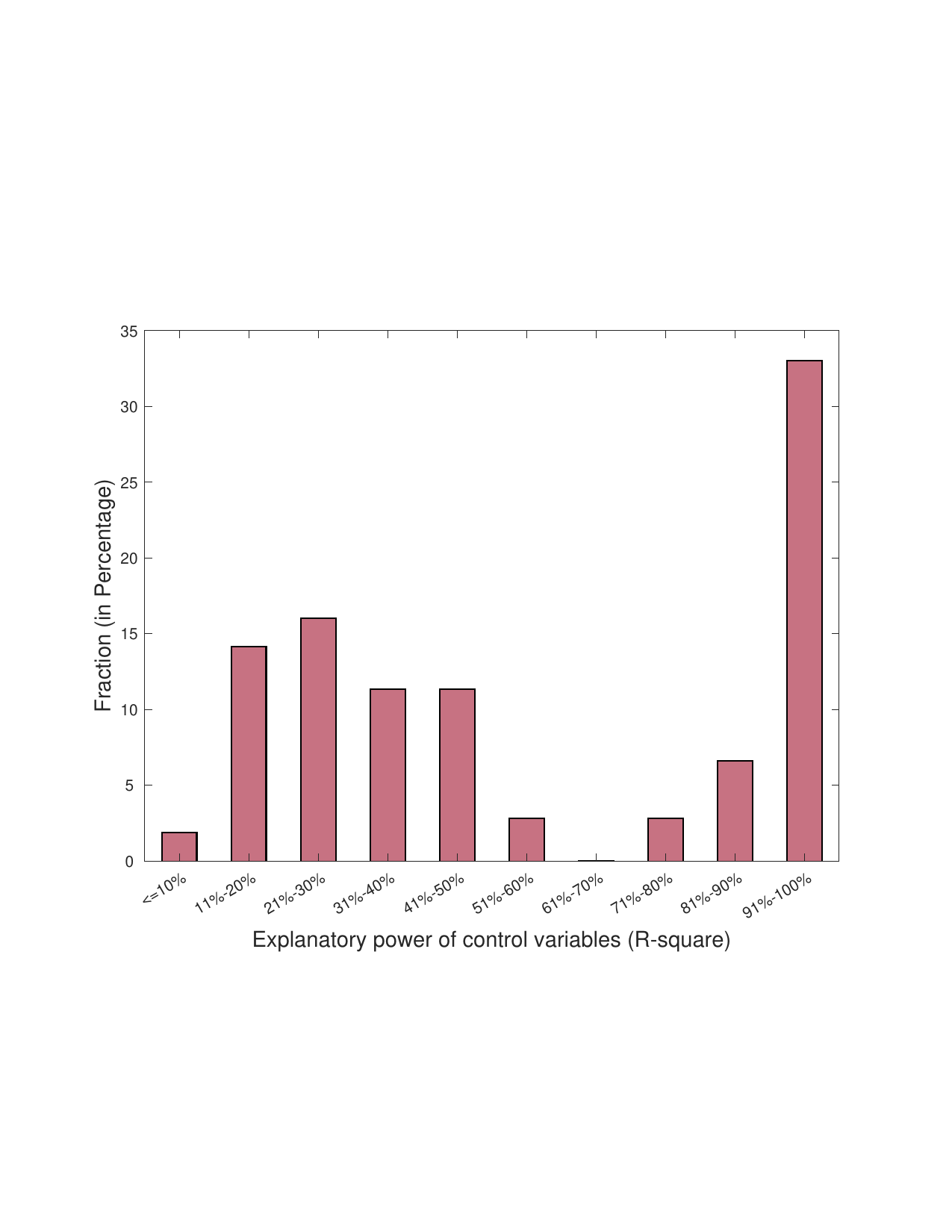}
%\floatfoot{
\newline
\tiny{Note: Here we have plotted the histogram of R-square for different \enquote{date of stay--date of booking} combinations as specified in equation \ref{regression_equation1}.}
%}
\label{figure:R-square}. 
\end{figure}
In figure \ref{figure:R-square}, we plot the histogram of $R^2$ values from 106 different regressions. 
Our results show that, only in 33\% of the cases, the $R^2$ is above 90\%. 
In 55\% of the cases, the estimated $R^2$ is less than 50\% suggesting that even controlling for heterogeneity across websites, for a given hotel room for a specific \enquote{date of stay--date of booking} combination, there exists substantial variation between prices posted across websites. 

Our empirical finding of price dispersion in the case of homogeneous products leads to the following question: \enquote{In markets with symmetric firms selling homogeneous products, can prices differ in equilibrium if consumers incur zero search costs to obtain price information?}. 
In the Section \ref{section:model}, we present a simplified model of the market to show that in the presence of \emph{aggregate demand uncertainty} and \emph{capacity constraints}, price dispersion could exist as an equilibrium outcome of the pricing game even when products are homogeneous, consumers are homogeneous, all agents have perfect information about the market structure, and consumers face \emph{zero} search costs to acquire information about the products. 
Our model complements the existing literature that relies on consumer search costs to explain the price dispersion in equilibrium.

\section{A Model to Explain Price Dispersion}
\label{section:model}
Can price dispersion exist for homogeneous products when all consumers can acquire information about the availability and prices of the products at zero search cost? Note that, in our setting, a price comparison website effectively acts as an information exchange; it collates all relevant information and provides it to consumers at zero search cost.  
In this section, we present a simplified model of the market to show the existence of price dispersion under this setting.
It is shown that price dispersion could exist even when products are homogeneous,  consumers are homogeneous, all agents have perfect information about the market structure, and consumers face no search costs to acquire information about the products.

The model specification follows the approach in \citet{arnold2011}.
There are two firms labeled $i = \{1,2\}$ producing identical goods with same marginal cost $c>0$ and zero fixed cost.
Each firm is an expected profit maximizer.
Thus firm $i$ earns a profit per unit of $P_i-c$ by selling the good at the price of $P_i$.
The production capacity of each firm is constrained by a maximum of one unit of the good.

The consumers are assumed to have identical reservation value $v$, with  $v>c$ to avoid trivial outcomes.
There is aggregate demand uncertainty (common to each firm) in this market with the following feature.
There are two possible realizations of aggregate demand - high and low. 
In the high demand state (which occurs with probability $0<\alpha<1$, there are two active consumers in the market, whereas, in the low demand state (with probability $1-\alpha$), there is one active consumer.
Both the probability of occurrence of each state and the demand quantity is common knowledge between the two firms.
The firm’s decision problem is to set the price $P_i$, which they are required to complete simultaneously and non-cooperatively before the realization of the high/low demand state.\footnote{In this model, the only factors contributing to price dispersion are capacity constraint and demand uncertainty.}
In this model specification, it is straightforward to prove that the equilibrium price set by each firm $P_i$ will be restricted to the interval $[c, v]$; the details are left for the reader as an easy exercise.
The proposition \ref{p1}  below shows that there exists no symmetric equilibrium in pure strategy.

\begin{prop}\label{p1} 
There is no symmetric pure strategy  Nash equilibrium in the market described above.
\end{prop}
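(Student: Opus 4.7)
My plan is proof by contradiction: postulate a symmetric pure-strategy profile in which both firms quote some $P \in [c,v]$, and exhibit a strictly profitable unilateral deviation regardless of where $P$ sits in the interval. I would first pin down the candidate equilibrium expected profit. Under identical prices, the standard 50--50 tie-breaking rule gives each firm the single consumer with probability $1/2$ in the low-demand state, while in the high-demand state each firm sells its unit with certainty, since each is capacity-constrained at one unit and both consumers have reservation value $v \geq P$. Thus the candidate payoff is
\[
\pi(P,P) \;=\; (P-c)\Bigl(\alpha + \tfrac{1-\alpha}{2}\Bigr) \;=\; (P-c)\,\tfrac{1+\alpha}{2}.
\]

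Next I would split into two cases. If $P > c$, consider the deviation $P' = P-\varepsilon$ for small $\varepsilon>0$: the deviator now captures the low-state consumer with probability one, and in the high-demand state still sells its unit at $P-\varepsilon$ because the rival is capacity-rationed. The expected deviation payoff is $P-\varepsilon-c$, which strictly exceeds $(P-c)(1+\alpha)/2$ for small $\varepsilon$ whenever $\alpha<1$, the assumed range---delivering a contradiction. If instead $P=c$, the candidate payoff is $0$, and I would show the deviation $P' = v$ strictly dominates it. Such a deviation forfeits the low-state consumer (who buys from the cheaper rival at $c$), but in the high-demand state the rival at $c$ can serve only one of the two consumers due to its unit capacity, so the residual consumer is forced onto the deviator at $v$. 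The deviation payoff is $\alpha(v-c)>0$, again contradicting equilibrium.

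The main obstacle---and the economic content of the proposition---is making the residual-demand mechanism precise. Without the capacity constraint, ordinary Bertrand logic would rescue $(c,c)$ as a symmetric equilibrium; with it, undercutting beats every $P>c$ while upward deviation beats $P=c$, so no candidate in $[c,v]$ survives. In writing the argument I would therefore state the rationing rule explicitly---the strictly cheaper firm serves first up to its capacity, and any unsatisfied consumer then buys from the weakly more expensive firm provided its price does not exceed $v$---and invoke the conventional 50--50 tie-breaking in the low-demand state. Once these conventions are fixed, the two deviations above cover the entire interval $[c,v]$ and the contradiction is complete.
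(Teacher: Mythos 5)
Your proposal is correct and follows essentially the same two-case structure as the paper's proof: undercutting by $\varepsilon$ defeats any symmetric $P>c$ (the paper uses a general tie-breaking probability $t\in(0,1)$ where you fix $t=1/2$), and an upward deviation exploiting the rival's unit capacity in the high-demand state defeats $P=c$ (the paper deviates to $c+\varepsilon$ for a gain of $\alpha\varepsilon$, while you deviate to $v$ for the larger gain $\alpha(v-c)$; either suffices).
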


\begin{proof}
Please see the proof of Proposition \ref{p1_appendix} in appendix section \ref{section:model_Appendix}.
\end{proof}

Proposition \ref{p1} establishes the existence of price dispersion in equilibrium under the presence of demand uncertainty and capacity constraints. 
Next, we will show that, relaxing either the assumption of demand uncertainty or the assumption of capacity constraint would support a pure strategy symmetric Nash equilibrium.

\begin{prop}\label{p2}
Under demand uncertainty and no capacity constraint, if consumers are homogeneous, there exists a symmetric pure strategy  Nash equilibrium.
\end{prop}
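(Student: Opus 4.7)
The plan is to exhibit the Bertrand-like profile $P_1 = P_2 = c$ as a symmetric pure-strategy Nash equilibrium. The intuition is that once the per-firm capacity of one unit is relaxed, either firm can serve the entire market in either demand state, so the strategic situation collapses to standard Bertrand competition for a homogeneous product, with the only new feature being that aggregate quantity is random ex ante rather than deterministic.

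First, I would fix a tie-breaking convention: if both firms post the same price, the homogeneous consumers, being indifferent, are split equally between the two firms in each realized demand state. Under this convention, at the candidate profile $P_1 = P_2 = c$, each firm earns a per-unit margin of $c - c = 0$ in both the high- and low-demand states, so expected profit is $0$.

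Next, I would check for profitable unilateral deviations by firm $1$, holding firm $2$'s price at $c$. A deviation upward to some $P_1 > c$ (including $P_1 > v$) drives all consumers to firm $2$ in both states, since firm $2$ charges strictly less and faces no capacity limit; firm $1$ sells zero units and earns $0$, which is no improvement. A downward deviation to $P_1 < c$ attracts every consumer but yields a strictly negative per-unit margin in both states, producing strictly negative expected profit. Since neither direction strictly improves on $0$, the pair $(c,c)$ is a symmetric pure-strategy Nash equilibrium, which establishes existence as claimed.

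Finally, I would comment that the demand uncertainty plays no substantive role in the argument: the Bertrand undercutting logic applies state by state, so taking expectations over the $\{\text{high}, \text{low}\}$ realizations preserves the conclusion; this is exactly the feature that fails once a binding capacity constraint is reintroduced, as in Proposition \ref{p1}. The step requiring the most care is simply the accounting at the tying price — with no capacity constraint there is no rationing rule to specify, so I would make it explicit that the equilibrium persists regardless of how ties at $P_1 = P_2 = c$ are broken, since both firms earn zero under any split. Beyond this, I anticipate no substantive obstacle, as the argument is essentially a restatement of the classical Bertrand result in an environment where aggregate demand is stochastic but per-state undercutting incentives are unchanged.
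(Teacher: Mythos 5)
Your proposal is correct and follows essentially the same route as the paper: both exhibit $P_1=P_2=c$ as the symmetric pure-strategy equilibrium and verify that an upward deviation yields zero sales (hence zero profit) because the rival, unconstrained by capacity, absorbs all demand, while a downward deviation is strictly unprofitable. The paper's appendix proof additionally rules out every symmetric profile $P\in(c,v]$ via an explicit undercutting computation so as to conclude that $(c,c)$ is the \emph{unique} symmetric pure-strategy equilibrium, but since the proposition only asserts existence, your argument suffices as stated.
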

\begin{proof}
Please see the proof of Proposition \ref{p2_append} in appendix section \ref{section:model_Appendix}.
\end{proof}

In the next proposition, we show that symmetric equilibria exist (one for each state) when the firms are able to learn the state of the market demand (high or low) but continue to face capacity constraint.

\begin{prop}\label{p3}
Under no demand uncertainty, but with capacity constraint, there exists  a symmetric pure strategy  Nash equilibrium.
\end{prop}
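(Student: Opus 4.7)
The plan is to decouple the analysis state by state, since under no demand uncertainty each firm learns the realized demand state before choosing its price, so equilibrium can be analyzed conditional on the state. I will treat the high-demand state (two consumers) and the low-demand state (one consumer) as two separate pricing subgames and exhibit a symmetric pure strategy Nash equilibrium in each.

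For the high-demand state, I would propose the candidate profile $(P_1, P_2) = (v, v)$. Under the natural tie-breaking rule in which each firm captures one of the two indifferent consumers, each firm sells exactly one unit (its full capacity) and earns $v - c > 0$. To verify the equilibrium property, I would check unilateral deviations by firm $i$ holding $P_{-i} = v$. An upward deviation $P_i > v$ loses both consumers and yields zero profit. A downward deviation $P_i < v$ is the interesting case: because the rival still has capacity only one, at most one consumer can be served by the rival, so firm $i$ still sells exactly one unit but now at a lower margin $P_i - c < v - c$. Thus no unilateral deviation is strictly profitable, and $(v, v)$ is a symmetric pure strategy Nash equilibrium in this state.

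For the low-demand state, there is a single consumer and two capacity-feasible firms, so the subgame reduces to standard Bertrand competition. I would propose the classical marginal-cost pricing profile $(P_1, P_2) = (c, c)$, which yields zero profit for each firm. A unilateral deviation $P_i > c$ sends the unique consumer to the rival and earns zero (no strict improvement); a deviation $P_i < c$ yields a negative margin whenever the good is sold. Hence $(c, c)$ is a symmetric pure strategy Nash equilibrium in this state.

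The main obstacle, and the step where I would be most careful, is the residual-demand/rationing argument in the high-demand state: the reason an undercut cannot steal both consumers is precisely the rival's unit capacity, and this same observation also shows why no $(p, p)$ with $p < v$ can be a symmetric equilibrium (a firm can profitably raise to $v$ and still sell one unit). Making this explicit clarifies the contrast with Proposition \ref{p1}, since it highlights that the obstruction to a pure strategy equilibrium there came from demand uncertainty rather than from capacity per se. Once the tie-breaking rule is fixed (consumers buy when indifferent, and in the high state split across the two firms), the verification of no profitable deviation in either state is immediate.
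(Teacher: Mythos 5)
Your proposal is correct and matches the paper's own argument: both decompose by demand state, verify $(v,v)$ in the high state by noting that unit capacity of the rival guarantees each firm sells its one unit at any price in $[c,v]$ (so $v$ is a best response), and fall back on the standard Bertrand outcome $(c,c)$ in the low state. Your explicit treatment of the rationing/tie-breaking rule and of why no $(p,p)$ with $p<v$ survives is a slightly more careful writeup of the same reasoning, not a different route.
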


\begin{proof}
Please see the proof of Proposition \ref{p3_append} in appendix section \ref{section:model_Appendix}.
\end{proof}

Propositions \ref{p2} and \ref{p3} demonstrate that relaxing either demand uncertainty or capacity constraint would support a pure strategy Nash equilibrium in the pricing game where both firms charge identical prices leading to no price dispersion. In proposition \ref{p4}, we characterize a 

\begin{prop}\label{p4} 
There exists a symmetric mixed strategy Nash equilibrium in the market described above.
\end{prop}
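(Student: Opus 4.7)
The plan is to construct an explicit symmetric mixed strategy equilibrium in which both firms randomize according to the same cumulative distribution function $F$ supported on an interval $[\underline{p},v]\subseteq[c,v]$, and then to verify the equilibrium conditions. The starting point is to write down the expected profit of firm $i$ when it charges $p$ and firm $j$ draws its price from the CDF $F$. In the high-demand state (probability $\alpha$), both firms sell their single unit irrespective of their prices, so firm $i$ earns $p-c$. In the low-demand state (probability $1-\alpha$), firm $i$ sells only if it is strictly cheaper than firm $j$; assuming $F$ has no atom at $p$, this gives
\[
\pi(p) \;=\; \alpha(p-c) \;+\; (1-\alpha)(p-c)\bigl(1-F(p)\bigr).
\]

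Next I would pin down the support. Since $v$ is the reservation value, any price above $v$ yields zero demand, so any best response lies in $[c,v]$. The upper bound of the support must equal $v$: if it were some $\bar p<v$, then at $\bar p$ the firm is undercut in the low state with probability one, so $\pi(\bar p)=\alpha(\bar p-c)<\alpha(v-c)=\pi(v)$, contradicting that $\bar p$ is played in equilibrium. A standard undercutting argument rules out atoms in the support: an atom at any price $p^\ast$ would create a discrete downward jump in $1-F$ there, making $p^\ast-\varepsilon$ strictly better than $p^\ast$ for small $\varepsilon>0$. Similarly, the support cannot contain a gap $(p_1,p_2)$, because raising one's price from $p_1$ to just below $p_2$ would leave the winning probability unchanged but strictly raise the margin. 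So $F$ is continuous and strictly increasing on an interval $[\underline p, v]$.

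The core step is then the indifference condition: for every $p$ in the support, $\pi(p)$ must equal $\pi(v)=\alpha(v-c)$. Solving
\[
(p-c)\bigl[\alpha+(1-\alpha)(1-F(p))\bigr] \;=\; \alpha(v-c)
\]
for $F$ yields
\[
F(p) \;=\; 1 - \frac{\alpha}{1-\alpha}\cdot\frac{v-p}{p-c},
\]
and the lower bound $\underline p$ is determined by $F(\underline p)=0$, giving $\underline p = c + \alpha(v-c)$. I would then verify that $F$ is a bona fide CDF on $[\underline p, v]$: $F(\underline p)=0$, $F(v)=1$, $F$ is continuous, and differentiation gives $F'(p)=\alpha(v-c)/[(1-\alpha)(p-c)^2]>0$, so $F$ is strictly increasing.

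Finally, I would rule out deviations outside the support. For $p>v$ the firm sells nothing, so profit is $0<\alpha(v-c)$. For $c\le p<\underline p$, the firm wins the low-demand consumer with probability one, so profit is $p-c<\underline p-c=\alpha(v-c)$; again no gain. Hence both firms playing $F$ is a symmetric mixed-strategy Nash equilibrium. The main delicate step is the set of \emph{qualitative} arguments that force the support to be an interval ending at $v$ and preclude atoms; once these are in place, the indifference equation pins down $F$ uniquely and the verification is routine.
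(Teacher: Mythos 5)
Your proposal is correct and follows essentially the same route as the paper: write the expected profit under the atomless candidate CDF, impose indifference with the profit $\alpha(v-c)$ earned at the upper bound $v$, solve for $F$, and pin down $\underline{p}=\alpha v+(1-\alpha)c$; your closed form for $F$ is algebraically identical to the paper's. If anything, you are more thorough than the paper's appendix, which simply assumes an atomless interval support ending at $v$ rather than deriving these properties and which does not explicitly check deviations below $\underline{p}$.
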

\begin{proof}
Please see the proof of Proposition \ref{p4_append} and characterization of the symmetric mixed strategy Nash Equilibrium in the appendix section \ref{section:model_Appendix}.    
\end{proof}
Our characterization of the symmetric mixed strategy Nash Equilibrium shows that each seller will charge a price in the interval specified below under the cdf as described here, i.e.,  
\begin{equation}
\label{profit_eq_firm18}
P_i \in [\alpha v+(1-\alpha)c, v], \; \text{and}\; F(P_i) = \frac{(P_i-c) - \alpha (v-c)}{(P_i-c)(1- \alpha)}, \; i=1, 2, 
\end{equation}
is the symmetric mixed strategy Nash equilibrium of this game.
Note that, as uncertainty is resolved where the high demand state occurs with probability one, i.e., $\alpha \rightarrow 1$, $P_i \rightarrow v$. 

\section{Evolution of Price Dispersion: empirical evidence}
\label{section:priceconvergence}
This section investigates how average price and price dispersion evolve as the date of booking approaches the date of stay. 
\begin{figure}[H]
\begin{center}
\caption{Mean Price up to the date of booking}
\label{mean_price}
\includegraphics[width=.6\textwidth]{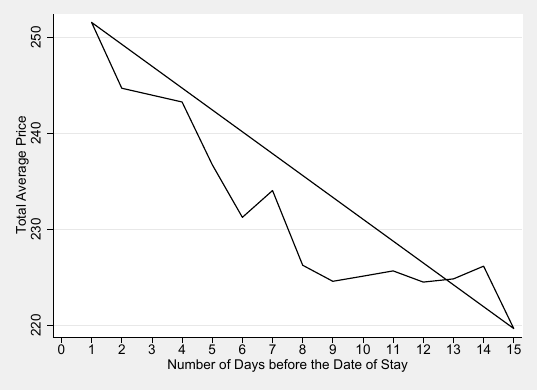}
\end{center}
\end{figure}

To track how the average price evolves as the booking date gets closer to the date of stay, we calculated the average price across all hotels, room types, websites, and stay dates based on a specific number of days before the stay date. This average price was then plotted against the number of days before the stay date in Figure \ref{mean_price}. Note that we gathered hotel price data for various stay dates and room types. For each date and room type, we collected prices from different websites for the 15 days leading up to the stay date. As we collected data for the 15 days preceding the stay date, there are 15 average values displayed on the graph.
 
As figure \ref{mean_price} indicates, the average hotel room price increases as we approach the check-in date (i.e., the x-axis values are closer to zero). 
This may result from hotels charging higher prices as they approach the capacity constraint. 
It may also indicate that price-insensitive business consumers book hotel rooms close to the booking date, and hence hotels may charge a higher price from those consumers effectively raising the average price. 
Next, we investigate if this implies higher (or lower) price dispersion while approaching the check-in date.

We conduct a set of regression analyses to provide evidence of price convergence across websites as we approach the date of stay. First, we run one-day ahead regression up to $14$ days before the date of stay and plot the coefficients and corresponding confidence intervals in figure \ref{cvtime_london}. The $k^{th}$ dot in the diagram refers to the coefficient of $CV_{t-k}$ in a regression where the dependent variable is $CV_{t-k+1}$ and the independent variable is $CV_{t-k}$. For example, the first dot in figure \ref{cvtime_london} plots the coefficient where we regress the  CV of prices for a given hotel on date $t$ on the CV of prices for the same hotel on date $(t-1)$. The second dot corresponds to the coefficient of $CV_{t-2}$ when the dependent variable is $CV_{t-1}$. 
\begin{figure}[H]
\begin{center}
\caption{Evolution of Price Dispersion (CV) up to the date of booking}
\label{cvtime_london}
\includegraphics[width=0.7\textwidth]{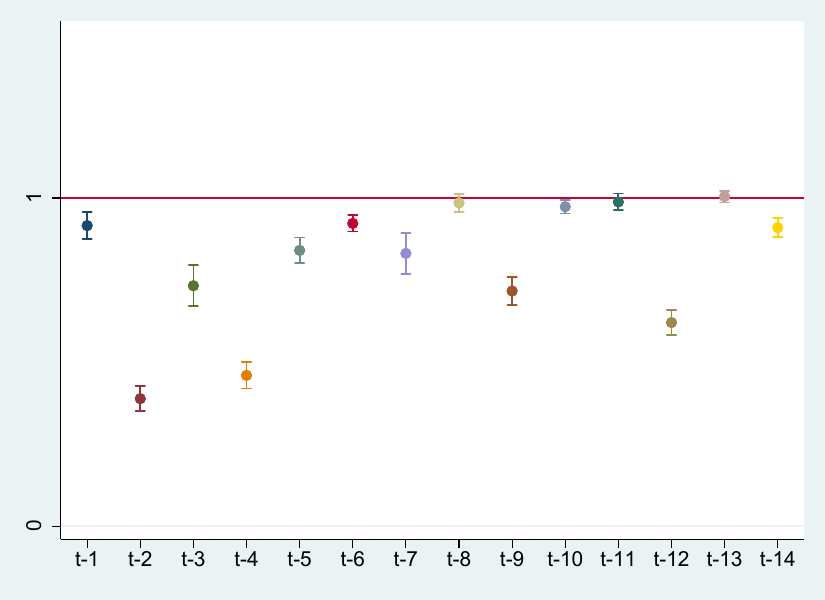}
\end{center}
\end{figure}

Our results from figure \ref{cvtime_london} suggests that all of the coefficient estimates are less than $1$ and almost all the confidence intervals of the coefficient estimates exclude $1$ with $95\%$ probability. 
This suggests that the CV on any day is less than the corresponding CV on the previous day as the coefficient is always less than $1$. 
Additionally, the figure suggests that the coefficient estimates for the last 6 days decline at an increasing rate. 
For example, moving from day $(t-6)$ to $(t-5)$, CV decreases by around $10\%$ where as moving from day $(t-5)$ to $(t-4)$, it decreases by more than $20\%$. 
Our supporting model shows that price dispersion exists in the equilibrium under demand uncertainty and capacity constraint. 
In addition to this, current literature suggests that consumer search costs may also contribute to price dispersion in equilibrium. 
Our results are consistent with both of these explanations. 
Our estimates show that in consistency with search theory, price dispersion persists over time, never converges to zero, and hence the law of one price does not hold (similar to \cite{baye2001information} and \cite{escobari2012dynamic}).\footnote{Note that if the law of one price holds on the date of stay, then $CV_{t}$ should be close to zero. This implies that the regression coefficient of $CV_{t}$ on $CV_{t-1}$ would be very close to zero. As figure \ref{cvtime_london} shows, the confidence interval for the regression coefficient of $CV_{t}$ on $CV_{t-1}$ is bounded away from zero and is close to 1, suggesting that price dispersion persists.} 
However, the decrease in the price dispersion over time is consistent with the hypothesis that since websites learn about the demand characteristics, the demand uncertainty gets resolved. 
This leads to a lower price dispersion as we get closer to the date of stay.

To provide further evidence, we regress the coefficient of variation (CV) at date $t$ for a given hotel on the corresponding CV for the same hotel from date $t-1$ while controlling for a set of hotel characteristics and several fixed effects. 
The regression specification is given by:
\begin{equation}
CV_t = \alpha CV_{t-1} + \beta_w X_w + \beta_h X_h + \epsilon    
\end{equation}
Our key parameter of interest is $\alpha$ which captures the evolution of CV after one day lag. 
$X_w$ denotes the website-specific characteristics, and $X_h$ denotes the hotel-specific characteristics. The controls include the number of websites that list the hotel price, the webpage number on which the hotel is displayed on Skyscanner, a dummy for the last three days before the date of stay,  number of reviews posted on websites for a hotel, hotel star rating (0
stars to 5 stars), and ratings by consumers (varies continuously from 1 to 10). We include fixed effects for hotels, date of stay, hotel quality, and ``hotel-date of stay-room type'' combinations in different specifications. 
Based on the ratings on the Skyscanner website, the hotels are grouped into different categories: Excellent, Very Good, Good, Average, and Satisfactory. The quality fixed effects
are based on this categorical variable. 
We report the results in table \ref{cvreg_london}. 
We present estimation results from seven different specifications. 
\begin{table}[H]
\begin{center}
\caption{Regression results - Coefficient of Variation}
\label{cvreg_london}
\scalebox{0.62}{
\begin{tabular}{lccccccc} \hline
 & (1) & (2) & (3) & (4) & (5) & (6) & (7) \\
Dependent variable: CV & Model1 & Model2 & Model3 & Model4 & Model5 & Model6 & Model7 \\ \hline
 &  &  &  &  &  &  \\
Previous day CV & 0.767*** & 0.758*** & 0.758*** & 0.757*** & 0.757*** & 0.507*** & 0.399*** \\
 & (0.00597) & (0.00598) & (0.00599) & (0.00600) & (0.00601) & (0.00845) & (0.00925) \\
Number of Websites &  & 0.0294*** & 0.0296*** & 0.0301*** & 0.0306*** & 0.0619*** & 0.0672***\\
 &  & (0.00249) & (0.00257) & (0.00259) & (0.00261) & (0.00378)  & (0.00406)\\
Page No in Skyscanner &  & -0.000851*** & -0.000809*** & -0.000764*** & -0.000798*** & -0.00239*** & -0.00356***\\
 &  & (0.000249) & (0.000255) & (0.000256) & (0.000257) & (0.000410) & (0.000604)\\
Dummy for last 3 days &  & 0.00475** & 0.00470** & 0.00475** & 0.00489** & 0.00497** & 0.00494** \\
 &  & (0.00201) & (0.00201) & (0.00201) & (0.00202) & (0.00197) & (0.00202) \\
No of Hotel Reviews &  &  & -0.000307 & -0.000112 & -0.000161 & 0.00499  \\
 &  &  & (0.00103) & (0.00105) & (0.00106) & (0.00940)  \\
Hotel star rating &  &  & -0.0113 & -0.0156 & -0.0151 & 0.0222  \\
 &  &  & (0.0108) & (0.0111) & (0.0111) & (0.358) \\
Hotel Review Rating &  &  & -0.00490 & -0.0116 & -0.00889 & -0.0158 \\
 &  &  & (0.0129) & (0.0329) & (0.0330) & (0.213) \\
Constant & 0.0171*** & -0.0305*** & -6.53e-05 & 0.00301 & -0.00434 & -0.127 & -0.0911\\
 & (0.00102) & (0.00536) & (0.0275) & (0.0595) & (0.0601) & (0.592)  & (0.151) \\
 &  &  &  &  &  &  \\ \hline
 &  &  &  &  &  &  \\ 
Observations & 10,829 & 10,829 & 10,829 & 10,829 & 10,829 & 10,829 & 10,829 \\
R-squared & 0.604 & 0.610 & 0.610 & 0.610 & 0.610 & 0.666 & 0.700\\ 
Hotel FE & No & No & No & No & No & Yes & No \\
Date of Stay FE & No & No & No & No & Yes & Yes & No \\
 Quality FE & No & No & No & Yes & Yes & Yes & No \\ 
  Hotel x Date Of Stay x Room Type FE & No & No & No & No & No & No & Yes \\ \hline
\multicolumn{7}{c}{ Standard errors in parentheses} \\
\multicolumn{7}{c}{ *** p$<$0.01, ** p$<$0.05, * p$<$0.1} \\ \hline
\end{tabular}}
\end{center}
\end{table}
Our results show that the estimated $\alpha$ is statistically significant irrespective of the model specification and lies between 0.40 and 0.77.\footnote{Note that in model 7, we include hotel id-date of stay-room type fixed
effects. Hence, we can not include controls such as the number of hotel reviews, hotel star
rating, and hotel review rating, as those variables are absorbed by these fixed effects.} 
Estimated $\alpha$'s value is less than 1 suggests that compared to the previous day, on average, CV goes down the next day as we approach the date of stay. 
The coefficient of the number of websites is positive and significant, in line with the hypothesis that a higher number of websites on which prices are posted leads to higher price dispersion. 
The coefficient of the Page number in Skyscanner on which a hotel is displayed is negative and significant. 
Hotels are sorted in terms of popularity, and a less popular hotel is displayed on a later page in our sample. 
Our estimates suggest that a less popular hotel faces lower demand and hence may have less price dispersion. 
The coefficient of the dummy for the last three days before the date of stay is positive and significant. 
This is consistent with the fact that business consumers and last-minute consumers show up in the last 3 days and are less price sensitive. 
Hence, price dispersion may increase due to the presence of such consumers. 
The coefficient estimates the number of hotel reviews, hotel star ratings, and ratings by consumers are not significantly different from 0. 
Given that CV is a standardized value and we do not use the variation across hotels, our results suggest that these hotel-specific characteristics may not significantly affect the evolution of CV. 
To provide robustness checks, we repeat the exercise for the logarithm of the coefficient of variation and price range. Further note that if prices are posted just on one website, then CV is equal to zero by definition. To make sure that such observations do not drive our results, we exclude those observations with a single website and run the regressions with the smaller sample. 
We report the results in the Appendix Section \ref{section:robustness_appendix}.

To summarize, our regression results highlight two consistent facts. 
First, price dispersion (measured in different ways) converges over time as the booking date approaches the date of stay. 
This is consistent with websites learning about uncertain market demand over time. 
Second, however, the law of one price does not hold, and price dispersion persists. 

\section{Conclusion}
\label{section:conclusion}
In this paper, we study the existence and persistence of price dispersion for homogeneous products across different online platforms in spite of the wide adoption of price comparison websites. 
We collect data from the 200 most popular hotels in London (UK) and document that the price to book a hotel room varies across booking websites. 
Our empirical evidence suggests that prices posted across websites may vary widely, even for a given \enquote{date of stay--date of booking--hotel id--room type} combination. 
Our regression analysis shows that such price dispersion patterns are robust to controlling for heterogeneity across the booking websites. 
We then investigate how the dispersion of prices evolves as the date of booking approaches the date of stay. 
We find that as the booking date gets closer to the date of stay, prices listed across different platforms converge, leading to a drop in the dispersion. 
Nevertheless, the price dispersion persists until the date of stay, implying that the law of one price does not hold.

We provide a simplified theoretical model of why price dispersion may exist for homogeneous products when all consumers can access an information exchange (such as a price comparison website) and acquire information about the availability and prices of the products at zero search cost. 
In contrast to the traditional literature that relies on consumer search costs to show the existence of price dispersion, we provide a model that shows that in the presence of aggregate demand uncertainty and capacity constraints, price dispersion could exist even when products are homogeneous, consumers are homogeneous, all agents have perfect information about the market structure, and consumers face no search costs to acquire information about the products. 

Our empirical findings show that in consistence with search theory, price dispersion persists over time, never converges to zero, and hence the law of one price does not hold. 
However, the decrease in the price dispersion over time is consistent with the hypothesis that since websites learn about the demand characteristics, the demand uncertainty gets resolved. 
This leads to a lower price dispersion as we get closer to the date of stay.  
Therefore, in the context of the hotel industry, our results complement the existing literature on search theory and provide additional insights into the existence and persistence of price dispersion phenomenon.

\clearpage
\newpage
\setlength{\bibsep}{-1pt}
\bibliographystyle{econ-aea} 
\small{\bibliography{reference}}

\clearpage
\newpage
\appendix
\begin{center}
    \Large\textbf{Appendix: For On-line Publication Only}
\end{center}
\section{Proofs}
\label{section:model_Appendix}

\begin{prop_append}\label{p1_appendix} 
There is no symmetric pure strategy  Nash equilibrium in the market described above.
\end{prop_append}

\begin{proof}
Observe that the expected profit of firm $1$, conditional on firm 2 setting a price $P_2$, when it sets its price equal to $P_1$, is given by,
\begin{equation}
\label{profit_eq_firm1}
\mathbbm{E}\pi_1 = (P_1 - c)\Pr(q_1(P_1, P_2)), 
\end{equation}
where $q_1(P_1, P_2)$ is the quantity sold by firm $1$ when the price vector set by the two firms is $(P1, P_2)$.
We need to consider two cases, (i) $P_1 = P_2 = P \in (c, v]$, and (ii) $P_1 = P_2 = P = c$.
\begin{enumerate}
\item[(i)]{$P_1 = P_2 = P \in (c, v]$: 
We show there exists a profitable deviation.
The expected payoff for firm 2 is
\begin{equation}\label{expected_pi_at_vv}
\mathbbm{E}\pi_2 = \alpha(P-c)+(1-\alpha)\left[t(P-c)+(1-t)(0)\right] = [\alpha + t(1-\alpha)](P-c), 
\end{equation} 
where we have assumed that in the low demand state firm 2 is able to sell one unit of good with probability $t\in (0, 1)$.%
\footnote{In the high demand state since the aggregate quantity demanded is $q=2$, each firm is able to sell one unit of good for sure due to capacity constraint.
On the other hand, in the low demand state  since the aggregate quantity demanded is $q=1$, firm 1 is able to sell one unit of good with a probability  $t\in (0, 1)$ and the firm 2 sells one unit with probability $1-t$.}
We next consider firm 2 deviating from the symmetric price $P$ to $P-\varepsilon \in (c, v)$, $\varepsilon>0$.
The expected profit at new price is 
\begin{equation}
\mathbbm{E}\pi_2^{\prime} = \alpha(P-c-\varepsilon) +(1-\alpha)(P-c-\varepsilon) = P-c-\varepsilon.
\end{equation}
Then,
\begin{equation}
\mathbbm{E}\pi_2^{\prime} - \mathbbm{E}\pi_2 = (P-c-\varepsilon)- [\alpha + t(1-\alpha)](P-c) = (1-\alpha)(1-t)(P-c) - \varepsilon >0       
\end{equation}
for sufficiently small deviation (take for example $0< \varepsilon < (1-\alpha)(1-t)(P-c)$).
Due to symmetric firm assumption, the argument goes analogously for Firm 1 where Firm 2 sets price at $P$.}
\item[(ii)]{$P_1 = P_2 = c$:
Note that the expected profit for Firm $2$ at  $P=c$ is $\mathbbm{E}\pi_2=0$. 
If Firm $2$ sets a price $c+\varepsilon\in (c, v)$ with $\varepsilon > 0$, then the expected profit is 
\begin{equation}
\mathbbm{E}\pi_2^{\prime}= \alpha(c+\varepsilon-c) + (1-\alpha) 0 = \alpha \varepsilon >0
\end{equation}
Note that Firm 2 sells one unit of the good in high demand state due to the binding capacity constraint of Firm 1.
Hence charging a price  ($c+\epsilon$) is a profitable deviation from setting price at $c$.} 
\end{enumerate}
 Due to symmetric firm assumption, the argument goes analogously for Firm 1 where Firm 2 sets price at $P$.
\end{proof}

\begin{prop_append}\label{p2_append}
Under demand uncertainty and no capacity constraint, if consumers are homogeneous, there exists a symmetric pure strategy  Nash equilibrium.
\end{prop_append}
\begin{proof}
In the absence of any capacity constraint, each firm can can cater to the aggregate demand in both high and low demand states. 

We rule out $P_1=  P_2 = P \in (c, v]$ as a symmetric equilibrium outcome.
For contradiction, we consider both firm charge price $P\in (c, v]$ and compute the profit earned by Firm 2.
In the high demand state, Firm 2 can sell two units with probability $0<r<1$ and one unit with probability $0<s<1-r$. 
In the low demand state, Firm 2 can sell one unit with probability $0<t<1$. 
If the firm deviates to a price $P-\varepsilon \in (c, v)$, it sells two units in the high demand state and one unit in the low demand state.
At price $P$, the profit is
\begin{equation}
\mathbbm{E} \pi_2= [\alpha(2r+s)+(1-\alpha)t](P-c).
\end{equation}
At price $P-\varepsilon$, the profit is
\begin{equation}
\mathbbm{E} \pi^{\prime}_2= [\alpha(2)+(1-\alpha)1](P-c-\varepsilon) = (1+\alpha)(P-c-\varepsilon).
\end{equation}
There exists a profitable deviation if $\mathbbm{E}\pi^{\prime}_2 > \mathbbm{E} \pi_2$ for some $\varepsilon>0$, i.e., 
\begin{equation}
\mathbbm{E} \pi^{\prime}_2- \mathbbm{E}\pi_2 = (1+\alpha)(P-c-\varepsilon)- [\alpha(2r+s)+(1-\alpha)t](P-c)>0.
\end{equation}
Simplifying, we get
\begin{equation}
[1+\alpha- \alpha(2r+s)-(1-\alpha)t](P-c)> \varepsilon (1+\alpha).
\end{equation}
Or
\begin{equation}
\left[(1-t)(1-\alpha r)+ \alpha(1-r-s)+\alpha t(1-r)\right](P-c)> \varepsilon (1+\alpha).
\end{equation}
Each term in the parenthesis on the left hand side is positive. 
Thus a profitable deviation $\varepsilon>0$ always exists.

Next, we claim that $P_1 = P_2 = c$ is a pure strategy symmetric equilibrium.
Since firms offer homogeneous products, at any demand state, if firm $2$ sets price $P=c$, it earns zero profit which would continue to be the level of profit if deviates to a price $P_i>c$.%
\footnote{Firm 2 ends up with zero sales as Firm 1 does not face any capacity constraint and would be able to sell two units in the high demand state and will sell one unit in low demand state for sure due to its lower price.}
Thus there is no profitable deviation and both firms charging same price equal to $c$ is the \emph{unique} symmetric pure strategy  Nash equilibrium. 
\end{proof}

\begin{prop_append}\label{p3_append}
Under no demand uncertainty, but with capacity constraint, there exists  a symmetric pure strategy  Nash equilibrium.
\end{prop_append}

\begin{proof}
In this setting, since we retain the assumption of capacity constraint, each firms has one homogeneous product to sell. 
Note that, there is no demand uncertainty, therefore, the firms are aware of the demand state while setting the prices. 
We claim that $P_1 = P_2 = v$ is the \emph{unique} pure strategy symmetric equilibrium when demand is high, and $P_1 = P_2 = c$ is the \emph{unique} pure strategy symmetric equilibrium when demand is low.

When aggregate demand is high, given that, firm $1$ sets price at $v$, firm $2$'s payoff from setting a price $P_2\in [c, v]$ is given by
\begin{equation}
(P_2 - c)\text{Pr}(q_2(v, P_2)).
\end{equation}
Since $\text{Pr}(\text{sale}(v, P_2))$ is equal to 1 under high state for any $P_2\in [c,v]$, setting $P_2 = v$ maximizes the payoff of firm 2. Hence \{$v,v$\} is a pure strategy symmetric equilibrium. 

At low demand state, when there are two firms and one consumer, given that firm $1$ sets price at $c$, firm $2$ sets the same price at equilibrium giving $0$ expected payoff to the firm. 
There is no profitable deviation for firm $2$ to charge more than $c$ which will lead to zero sales with probability one. 
Hence $P_1=P_c=c$ is a pure strategy symmetric equilibrium in this case, which is the equilirbium outcome in the usual Bertrand price competition market game. 
\end{proof}

\begin{prop_append}\label{p4_append} 
There exists a symmetric mixed strategy Nash equilibrium in the market described above.
\end{prop_append}

\begin{proof}
For any symmetric mixed strategy equilibrium in this game, the price support $[\underline{P}, \overline{P}]\subset[c,v]$ and the cdf $F(P_i)$ is a continuous function (i.e., it is atomless) for each of the two firms.
Note that for price $P_1$ of firm $1$, $P_2>P_1$ with probability $1-F(P_1)$.
Thus the profit for the firm 1 is 
\begin{equation}
\label{profit_eq_firm10}
\pi_1 (P_1)= (P_1 - c)\left(1-F(P_1)\right)\left[\alpha(1)+(1-\alpha)(1)\right] + (P_1-c)F(P_1)\left[\alpha(1)+(1-\alpha)(0)\right], 
\end{equation}
which is simplified to 
\begin{equation}
\label{profit_eq_firm11}
\pi_1 (P_1)= (P_1 - c)\left(1-F(P_1)\right) + \alpha (P_1-c)F(P_1). 
\end{equation}
Since $F(v)=1$, we can deteremine the profit for firm 1 at the mixed strategy equilibrium as 
\begin{equation}
\label{profit_eq_firm12}
\pi^{\ast}_1= \pi_1(v) = (v - c)\left(1-F(v)\right) + \alpha (v-c)F(v) = (v - c)\left(1-1\right) + \alpha (v-c)(1)= \alpha (v-c). 
\end{equation}
Observe that at every mixed strategy $P_1$, the profit of firm $1$, is given by,
\begin{equation}
\label{profit_eq_firm13}
\pi_1(P_1) = (P_1 - c)\left(1-F(P_1)\right) + \alpha (P_1-c)F(P_1) = \alpha (v-c). 
\end{equation}
This yields the cdf as 
\begin{equation}
\label{profit_eq_firm14}
F(P_1)  = \frac{(P_1-c) - \alpha (v-c)}{(P_1-c)(1- \alpha)}. 
\end{equation}
Given that the equilibrium profit for firm 1 is $\alpha (v-c)$, arbitrarily low prices could not be chosen. 
In particular, $P_1 \not< \underline{P}$ with $\underline{P}>c$ where $\underline{P}$ is such that $F(\underline{P})=0$.
\begin{equation}
\label{profit_eq_firm15}
F(\underline{P})  = \frac{(\underline{P}-c) - \alpha (v-c)}{(\underline{P}-c)(1- \alpha)}=0. 
\end{equation}
In other words,
\begin{equation}
\label{profit_eq_firm16}
\underline{P}  = c+\alpha (v-c) = \alpha v+(1-\alpha)c. 
\end{equation}
and 
\begin{equation}
\label{profit_eq_firm17}
\overline{P}  = v. 
\end{equation}
Therefore,
\begin{equation}
\label{profit_eq_firm19}
P_i \in [\alpha v+(1-\alpha)c, v], \; \text{and}\; F(P_i) = \frac{(P_i-c) - \alpha (v-c)}{(P_i-c)(1- \alpha)}, \; i=1, 2, 
\end{equation}
is the symmetric mixed strategy Nash equilibrium of this game.
\end{proof}

\section{Scatter Plot of Standard Deviation}
\label{appendixsection:stdev}
\begin{figure}[H]
\caption{Scatter plot of Standard Deviation}
\centering
\includegraphics[trim=0in 0in 0in 0in,scale=1.0]{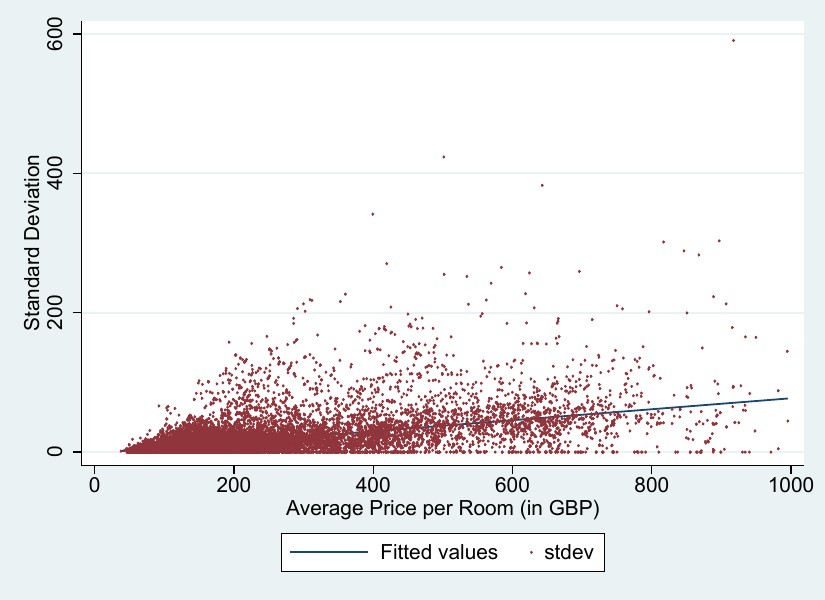}
%\floatfoot{
\newline
\tiny{Note: Here we have plotted the average prices for a \enquote{date of stay--date of booking--hotel id--room type} combination in the x-axis, and the standard deviation of prices in the y-axis.}
%}
\label{figure:stdev_london}
\end{figure}

In Figure \ref{figure:stdev_london}, we plot the standard deviation with respect to prices. As the figure shows, there is wide variation in the standard deviation values, suggesting the presence of price dispersion across websites.

\section{Robustness Checks}
\label{section:robustness_appendix}
\subsection{Dropping Single Website Observations}
Note that if prices are posted just on one website, then CV is equal to zero by definition. To make sure that such observations do not drive our results, we exclude those observations with a single website and run the regressions with the smaller sample. The results are presented in table \ref{cvreg_london_nosinglewebsite}. Comparing the results with table \ref{cvreg_london} suggests that our key results remain unaffected when we drop these observations.

\begin{table}[H]
\begin{center}
\caption{Regression results for CV after dropping observations with single website}
\label{cvreg_london_nosinglewebsite}
\scalebox{0.8}{
\begin{tabular}{lcccccc} \hline
 & (1) & (2) & (3) & (4) & (5) & (6) \\
Dependent variable: CV & Model1 & Model2 & Model3 & Model4 & Model5 & Model6 \\ \hline
 &  &  &  &  &  &  \\
Previous day CV & 0.767*** & 0.758*** & 0.758*** & 0.757*** & 0.756*** & 0.507*** \\
 & (0.00597) & (0.00598) & (0.00600) & (0.00601) & (0.00601) & (0.00845) \\
Number of Websites &  & 0.0302*** & 0.0305*** & 0.0311*** & 0.0316*** & 0.0669*** \\
 &  & (0.00253) & (0.00261) & (0.00264) & (0.00266) & (0.00396) \\
Page No in Skyscanner  &  & -0.000826*** & -0.000789*** & -0.000742*** & -0.000776*** & -0.00241*** \\
 &  & (0.000249) & (0.000255) & (0.000257) & (0.000258) & (0.000409) \\
Dummy for last 3 days &  & 0.00482** & 0.00478** & 0.00483** & 0.00499** & 0.00539*** \\
 &  & (0.00201) & (0.00201) & (0.00201) & (0.00202) & (0.00197) \\
No of Hotel Reviews &  &  & -0.000431 & -0.000229 & -0.000288 & 0.00459 \\
 &  &  & (0.00103) & (0.00105) & (0.00106) & (0.00940) \\
Hotel star rating &  &  & -0.0118 & -0.0162 & -0.0157 & 0.0269 \\
 &  &  & (0.0108) & (0.0111) & (0.0111) & (0.358) \\
Hotel Review Rating &  &  & -0.00465 & -0.0104 & -0.00776 & -0.0180 \\
 &  &  & (0.0129) & (0.0329) & (0.0330) & (0.213) \\
Constant & 0.0171*** & -0.0372*** & -0.00567 & -0.00455 & -0.0118 & -0.143 \\
 & (0.00102) & (0.00549) & (0.0276) & (0.0596) & (0.0601) & (0.592) \\
  &  &  &  &  &  &  \\
 &  &  &  &  &  &  \\ \hline
 &  &  &  &  &  &  \\ 
Observations & 10,817 & 10,817 & 10,817 & 10,817 & 10,817 & 10,817 \\
R-squared & 0.604 & 0.610 & 0.610 & 0.610 & 0.610 & 0.667 \\
Hotel FE & No & No & No & No & No & Yes \\
Date of Stay FE & No & No & No & No & Yes & Yes \\
 Quality FE & No & No & No & Yes & Yes & Yes \\ \hline
\multicolumn{7}{c}{ Standard errors in parentheses} \\
\multicolumn{7}{c}{ *** p$<$0.01, ** p$<$0.05, * p$<$0.1} \\ \hline
\end{tabular}
}
\end{center}
\end{table}

\subsection{Regressions with $\ln(CV)$ and $\ln(Range)$}
In addition to the results presented in Section \ref{section:priceconvergence}, here we present estimation results from two different specifications:
\begin{equation}
\ln (CV_t) = \alpha \ln (CV_{t-1}) + \beta_w X_w + \beta_h X_h + \epsilon       
\end{equation}
and 
\begin{equation}
    \ln (Range_t) = \alpha \ln (Range_{t-1}) + \beta_w X_w + \beta_h X_h + \epsilon
\end{equation}
where the range denotes the difference between the highest and lowest hotel room prices listed across different websites for a given date of stay and date of booking.  
The results documented in tables \ref{lncvreg_london} and \ref{lnrange_london} show that the estimated $\alpha$ is statistically significant and assumes a value less than 1, irrespective of the model specification. This is consistent with the results from Section \ref{section:priceconvergence} that compared to the previous day, on average, CV goes down the next day as we approach the date of stay.
\begin{table}[http]
\begin{center}
\caption{Regression results for $\ln$ (Coefficient of Variation)}
\label{lncvreg_london}
\scalebox{0.8}{
\begin{tabular}{lcccccc} \hline
 & (1) & (2) & (3) & (4) & (5) & (6) \\
Dependent variable: log(CV) & Model1 & Model2 & Model3 & Model4 & Model5 & Model6 \\ \hline
 &  &  &  &  &  &  \\
Previous day CV (in log) & 0.791*** & 0.775*** & 0.775*** & 0.774*** & 0.773*** & 0.527*** \\
 & (0.00590) & (0.00593) & (0.00595) & (0.00596) & (0.00597) & (0.00842) \\
Number of Websites &  & 0.0229*** & 0.0231*** & 0.0235*** & 0.0239*** & 0.0471*** \\
 &  & (0.00152) & (0.00157) & (0.00158) & (0.00159) & (0.00229) \\
Page No in Skyscanner &  & -0.000632*** & -0.000603*** & -0.000568*** & -0.000589*** & -0.00176*** \\
 &  & (0.000151) & (0.000155) & (0.000156) & (0.000156) & (0.000248) \\
Dummy for last 3 days &  & 0.00356*** & 0.00353*** & 0.00357*** & 0.00367*** & 0.00362*** \\
 &  & (0.00122) & (0.00122) & (0.00122) & (0.00123) & (0.00119) \\
No of Hotel Reviews &  &  & -0.000329 & -0.000163 & -0.000200 & 0.00323 \\
 &  &  & (0.000626) & (0.000636) & (0.000642) & (0.00569) \\
Hotel star rating &  &  & -0.00867 & -0.0117* & -0.0114* & 0.0109 \\
 &  &  & (0.00657) & (0.00672) & (0.00673) & (0.217) \\
Hotel Review Rating &  &  & -0.00418 & -0.000174 & 0.00124 & -0.0116 \\
 &  &  & (0.00784) & (0.0200) & (0.0200) & (0.129) \\
Constant & 0.0136*** & -0.0234*** & 0.00146 & -0.0109 & -0.0148 & -0.0793 \\
 & (0.000686) & (0.00326) & (0.0167) & (0.0362) & (0.0365) & (0.358) \\
 &  &  &  &  &  &  \\
 &  &  &  &  &  &  \\ \hline
 &  &  &  &  &  &  \\ 
Observations & 10,829 & 10,829 & 10,829 & 10,829 & 10,829 & 10,829 \\
R-squared & 0.624 & 0.632 & 0.632 & 0.632 & 0.633 & 0.688 \\
Hotel FE & No & No & No & No & No & Yes \\
Date of Stay FE & No & No & No & No & Yes & Yes \\
 Quality FE & No & No & No & Yes & Yes & Yes \\ \hline
\multicolumn{7}{c}{ Standard errors in parentheses} \\
\multicolumn{7}{c}{ *** p$<$0.01, ** p$<$0.05, * p$<$0.1} \\ \hline
\end{tabular}
}
\end{center}
\end{table}

\begin{table}[http]
\begin{center}
\caption{Regression results for $\ln$(Range). (Range refers to the gap between the highest and lowest prices posted for a hotel room across websites)}
\label{lnrange_london}
\scalebox{0.8}{
\begin{tabular}{lcccccc} \hline
 & (1) & (2) & (3) & (4) & (5) & (6) \\
Dependent variable: log(Range) & Model1 & Model2 & Model3 & Model4 & Model5 & Model6 \\ \hline
 &  &  &  &  &  &  \\
Previous day Range & 0.761*** & 0.667*** & 0.657*** & 0.656*** & 0.650*** & 0.324*** \\
 & (0.00635) & (0.00656) & (0.00669) & (0.00669) & (0.00676) & (0.00900) \\
Number of Websites &  & 1.018*** & 1.048*** & 1.061*** & 1.062*** & 1.591*** \\
 &  & (0.0285) & (0.0291) & (0.0294) & (0.0294) & (0.0381) \\
Page No in Skyscanner &  & 0.000947 & -0.00311 & -0.00248 & -0.00142 & -0.000361 \\
 &  & (0.00263) & (0.00269) & (0.00270) & (0.00271) & (0.00410) \\
Dummy for last 3 days &  & 0.0904*** & 0.0958*** & 0.0975*** & 0.0878*** & 0.0689*** \\
 &  & (0.0213) & (0.0212) & (0.0212) & (0.0213) & (0.0197) \\
No of Hotel Reviews &  &  & -0.0267** & -0.0224** & -0.0142 & 0.0229 \\
 &  &  & (0.0109) & (0.0110) & (0.0111) & (0.0940) \\
Hotel star rating &  &  & 0.619*** & 0.543*** & 0.539*** & -0.152 \\
 &  &  & (0.116) & (0.118) & (0.118) & (3.583) \\
Hotel Review Rating &  &  & 0.220 & 0.0481 & 0.157 & 2.089 \\
 &  &  & (0.136) & (0.347) & (0.347) & (2.127) \\
Constant & 0.683*** & -0.999*** & -2.291*** & -2.157*** & -2.519*** & -5.671 \\
 & (0.0216) & (0.0568) & (0.293) & (0.629) & (0.634) & (5.924) \\
 &  &  &  &  &  &  \\ \hline
 &  &  &  &  &  &  \\ 
Observations & 10,829 & 10,829 & 10,829 & 10,829 & 10,829 & 10,829 \\
R-squared & 0.570 & 0.616 & 0.618 & 0.619 & 0.620 & 0.707 \\
Hotel FE & No & No & No & No & No & Yes \\
Date of Stay FE & No & No & No & No & Yes & Yes \\
 Quality FE & No & No & No & Yes & Yes & Yes \\ \hline
\multicolumn{7}{c}{ Standard errors in parentheses} \\
\multicolumn{7}{c}{ *** p$<$0.01, ** p$<$0.05, * p$<$0.1} \\  \hline
\end{tabular}
}
\end{center}
\end{table}
\end{document}